\documentclass[final,5p,times,twocolumn]{elsarticle}
\usepackage[T1]{fontenc}
\usepackage[utf8]{inputenc}
\usepackage[english]{babel}
\usepackage{amsmath,amssymb,amsthm,mathtools}
\usepackage{tikz}
\usetikzlibrary{arrows.meta,positioning,calc}
\usepackage{microtype}
\usepackage{enumitem}
\usepackage{natbib}
\usepackage{hyperref}
\hypersetup{hidelinks}
\biboptions{authoryear,round}
\newtheorem{definition}{Definition}
\newtheorem{proposition}{Proposition}
\journal{arXiv.org}
\begin{document}
\begin{frontmatter}
\title{Reflexivity from Hierarchical Causality}
\author[uct]{Tim Gebbie}
\ead{tim.gebbie@uct.ac.za}
\address[uct]{Department of Statistical Sciences, University of Cape Town, Rondebosch 7701, Western Cape, South Africa}
\begin{abstract}
Complex systems are often organised into hierarchies whose internal interactions are stronger or faster than interactions across levels. When markets are treated as genuinely multilevel systems it becomes natural to represent them as systems with hierarchical causality. Here we show that reflexivity can be formulated within such a discrete hierarchical causal system; but one in which a higher-level actor state restricts the lower-level transition kernels that remain admissible. Then event dynamics can be separated from calendar embeddings: a set-valued actor-conditioned correspondence can be used to define the admissible family of event kernels, while joint state and waiting-time laws can be used to determine compatible timing to then natural demonstrate reflexivity. A selected event-state law need not determine a unique calendar embedding. Locally, uniqueness of the joint event or timing specification requires uniqueness of both the admissible event kernel and its compatible timing law. Reflexivity is thus the endogenous closure of a hierarchical constraint loop, while timing and projection can generate calendar-time memory or causal ambiguity even for Markov event dynamics.
\end{abstract}
\begin{keyword}
hierarchical causality; reflexivity; complex systems; top-down causation; event time; calendar time; semi-Markov processes 
\end{keyword}
\end{frontmatter}
\section{Hierarchical causality}
\label{sec:hc}

Complex systems are often organised into modules whose internal interactions are stronger or faster than interactions across modules. Following \citet{Simon1962}, nearly decomposable systems retain a degree of dynamical autonomy while interacting more weakly across levels or modules. \citet{May1972} gives a complementary stability result: for sufficiently random coupling, increasing system size, connectance, or interaction strength can drive a large system beyond a stability threshold. This does not imply that hierarchy must follow, but together the two results motivate organisation with strong internal cohesion and weaker external coupling rather than unrestricted high-dimensional coupling.

Even weak coupling can be causally important when a sparse actor interface changes the set of admissible lower-level transitions. Following \citet{AulettaEllisJaeger2008}, \citet{Ellis2012} and \citet{Gebbie2026}, the higher level acts by constraining lower-level possibilities rather than as a second physical force. \citet{Flack2017} gives an operational precedent in which components tune their behaviour in response to estimates of collectively produced macroscopic properties, while \citet{MontevilMossio2015} characterise biological organisation as closure among mutually dependent constraints. 

Hierarchical causality was introduced in a financial setting by \citet{WilcoxGebbie2014}, where markets are treated as genuinely multilevel systems ranging from local agents and traded assets through markets and common factors to institutional, regulatory, and social structures. In this setting agents implement local dynamics at particular levels of organisation, while actors instantiate causal roles that constrain those dynamics across levels. A discrete formulation of this structure is given by \citet{Gebbie2026}. Write the hierarchical causal system as
\begin{equation}
\Psi=(H,D,C,U),
\label{eq:hc-system}
\end{equation}
where $H$ carries hierarchical aggregation, $D$ carries local dynamics, $C$ carries actor-instance constraints, and $U$ carries level-specific event coordination. Aggregation produces higher-level states from lower-level activity. Top-down causation enters when higher-level organisation changes the lower-level dynamics that remain admissible. 

In Hierarchical Causality, local agents carry within-level dynamics, while a top-down causal role is represented by an actor $a\in\mathcal A_r$ through an actor instance
\[
A=(a,r,c,\ell,B_\ell,\eta).
\]
The actor instance records the actor, its causation class $c=\gamma(a)$, the lower-level target subsystem $B_\ell$, and the interface $\eta$ through which the constraint is implemented. The distinction between agent and actor is therefore one of causal role rather than physical identity. We fix one actor instance $A$ below; its level, target subsystem and interface are then fixed unless written explicitly. Here the higher-level action is an actor-conditioned admissible-kernel correspondence; event coordination and calendar embedding remain separate.

Figure~\ref{fig:hc-architecture} visualises the schematic representation using a hierarchical causal diagram. The horizontal and vertical routes distinguish lower-level transition from aggregation, while the dashed map represents the actor-instance constraint on the admissible target-subsystem kernel; it does not add a second equation of motion.

\begin{figure}[htbp]
\centering
\begin{tikzpicture}[
  x=1cm,y=1cm,>=Latex,
  every node/.style={font=\footnotesize},
  block/.style={align=center,minimum width=2.55cm,minimum height=0.92cm,inner xsep=2pt},
  lab/.style={font=\footnotesize,fill=white,inner sep=1.1pt,align=center}
]
    \node[block] (xr)     at (0,3.0) {$X_{r,n_r}$\\ higher-level state};
    \node[block] (xrdesc) at (5.5,3.0) {$\Pi_{\ell}(X_{\ell,n_{\ell}+1})$\\ aggregate description};
    \node[block] (xl)     at (0,0) {$X_{\ell,n_{\ell}}$\\ lower-level state};
    \node[block] (xlp)    at (5.5,0) {$X_{\ell,n_{\ell}+1}$\\ lower-level state};
    \draw[->] (xl.east) -- node[lab,below=3pt] {$K^{B_\ell}_{\ell,n_\ell}$} (xlp.west);
    \draw[->] (xl.north) -- node[lab,left=3pt] {$\Pi_{\ell}$} (xr.south);
    \draw[->] (xlp.north) -- node[lab,right=3pt] {$\Pi_{\ell}$} (xrdesc.south);
    \draw[->] (xr.east) -- node[lab,above=3pt] {descriptive aggregate route} (xrdesc.west);
    \draw[->,dashed,bend left=8]
      (xr.south east)
      to node[lab,midway,text width=3.4cm,xshift=-0.20cm]
      {$D^{A}_{r\rightarrow\ell}$ constraint on\\ $K^{B_\ell}_{\ell,n_\ell}$}
      (xlp.north west);
    \node[align=center,font=\footnotesize,text width=7.8cm] at (2.75,-1.02)
      {Non-commutation marks the gap between aggregation equivalence and causal equivalence.};
\end{tikzpicture}
\caption{Hierarchical causal structure.}
\label{fig:hc-architecture}
\end{figure}

\section{Causal structure}
\label{sec:causal-structure}

The structural causal models of \citet{Pearl2009} provide semantics for intervention and counterfactual dependence. For a structural variable $X$, the intervention $\operatorname{do}(X=x)$ replaces its structural assignment by $x$ while leaving the other assignments unchanged, and is therefore defined relative to a specified structural model. Hierarchical causality addresses a prior organisational question: where the causal role is instantiated, the interface through which it acts, and which lower-level transition structures remain admissible after the higher-level constraint is applied.

This is important because aggregation equivalence is not causal equivalence \citep{Gebbie2026}. Two lower-level states may have the same aggregate description and yet differ in their admissible next transitions once an actor-level constraint is active. Selecting one admissible kernel fixes the transition law. A structural realisation of that law is then needed to apply intervention semantics in the sense of \citet{Pearl2009}. The hierarchical description keeps explicit the state, actor role, interface, and admissible lower-level dynamics from which that transition law is obtained.

\citet{RubensteinEtAl2017} formalise exact transformations between structural equation models, including micro-to-macro aggregation and dynamical-to-stationary descriptions, while \citet{BeckersHalpern2019} develop causal abstractions tied to mappings of interventions. Both begin with specified causal models at the levels being related. Here aggregation together with actor state can instead leave a family of admissible lower-level transition laws before a unique structural realisation has been fixed.

Causal emergence, following \citet{HoelAlbantakisTononi2013}, asks whether a macro-description can have greater causal effectiveness than the micro-description from which it is constructed. Effective information measures this by perturbing the specified causal model across its possible states; a coarse-grained macro model can gain effective information when increased determinism or reduced degeneracy outweighs the smaller state space. \citet{Hoel2026} extends this question across a hierarchy of coarse-grained scales. Their question concerns where causal interactions are effective; here the question is how a higher-level state changes which lower-level transitions remain admissible.

\section{Reflexivity}
\label{sec:reflexivity}

Let $n=0,1,2,\ldots$ index the reflexive event sequence used below, let $X_n$ denote the corresponding lower-level state, and let $\Pi$ denote the aggregation map from the lower-level state to the higher-level state. In Hierarchical Causality a common event index $m$ is coordinated across levels by $n_\ell=U_\ell(m)$, so different levels need not update synchronously. The single index $n$ used here labels the event sequence on which the reflexive loop is written. Define the aggregate state by
\begin{equation}
M_n=\Pi(X_n).
\label{eq:aggregate-state}
\end{equation}
We allow the higher-level constraint to be set-valued. In Hierarchical Causality the fixed actor instance $A$ induces $D^A_{r\to\ell}$ on admissible lower-level kernels. We fix the target level, subsystem and interface, and assume that the relevant higher-level dependence acts through an actor state in $S_A$. Let $\mathcal D$, distinct from the dynamics block $D$ in Eq.~\eqref{eq:hc-system}, map actor state to the transition kernels that remain admissible. First consider the special case $\mathcal I=\mathrm{id}$, so that $s_n=M_n$. Then
\begin{equation}
K_n\in \mathcal D\!\left(M_n\right),
\label{eq:admissible-kernel-correspondence}
\end{equation}
where $K_n$ is the realised lower-level transition kernel at event $n$. The state $X_{n+1}$ is the lower-level state after that transition. Reflexivity arises when the admissible-kernel correspondence is itself conditioned by a state generated from the lower-level dynamics,
\begin{equation}
X_n\xrightarrow{\Pi}M_n\longrightarrow \mathcal D(M_n)
\ni K_n\longrightarrow X_{n+1}.
\label{eq:reflexive-constraint-map}
\end{equation}
Equation~\eqref{eq:reflexive-constraint-map} is already a feedback relation, but the feedback acts through the admissible dynamics rather than through an additional force.

Ordinary feedback may remain within a single dynamical level. Reflexivity becomes hierarchical when a state generated from below conditions an actor-level constraint that subsequently changes the lower-level dynamics that remain admissible. Let $\mathcal I$ denote the interpretation map from the aggregate state to the actor-state space $S_A$, and define the actor-state sequence $(s_n)_{n\ge0}$ by
\begin{equation}
s_n=\mathcal I(M_n),\qquad s_n\in S_A.
\label{eq:actor-state}
\end{equation}
The actor identity $a$, causation class $c$, target subsystem $B_\ell$ and interface $\eta$ are fixed, while $s_n$ varies with the aggregate state. The interpretation map $\mathcal I$ is distinct from the interface $\eta$: $\mathcal I$ maps the aggregate description into actor state, while $\eta$ implements the top-down restriction on lower-level dynamics. The causal map is then
\begin{equation}
X_n\xrightarrow{\Pi}M_n\xrightarrow{\mathcal I}s_n
\longrightarrow \mathcal D(s_n)\ni K_n\longrightarrow X_{n+1}.
\label{eq:actor-reflexive-map}
\end{equation}
The rule governing subsequent dynamics is endogenous because it depends on a state generated by those dynamics. This gives a hierarchical-causal interpretation of Sorosian reflexivity \citep{PolakowGebbieFlint2026,Soros2013}. Earlier mathematical models already formalise financial self-reference: \citet{WyartBouchaud2007} study strategies that alter the correlations from which those strategies are formed, while \citet{Palatella2010} gives an explicit reflexive price model motivated by the same idea. 

Here the feedback enters through an actor-conditioned admissible transition law, while its calendar-time representation remains separate. \citet{PolakowGebbieFlint2026} argue that causal chains in open, self-referential markets are malleable by market actors and may coexist in ways that are indistinguishable without intervention. Here we interpret that problem through hierarchical causality: lower-level activity produces the market state, while actor interpretation of that state changes the subsequent admissible dynamics. A non-singleton $\mathcal D(s_n)$ represents competing admissible causal chains schematically. Their epistemic argument motivates the problem; the correspondence $\mathcal D$ is introduced here.

A non-singleton $\mathcal D(s_n)$ leaves the lower-level transition law set-valued until a selection rule is supplied, before the event history is mapped into calendar time.

\section{Non-unique time}
\label{sec:nonunique-time}

Equation~\eqref{eq:actor-reflexive-map} is indexed by events, whereas valuation, risk measurement and many empirical causal claims are expressed in calendar time. Hierarchical Causality \citep{Gebbie2026} uses $U_\ell(m)$ to coordinate level-specific event counts; this is distinct from the event-to-calendar time change. In the operational-time notation of \citet{AngstmannGebbieAdjoint2026}, $u=U(t)$ maps calendar time $t$ to operational time $u$, with activity rate $\alpha(t)=\mathrm dU(t)/\mathrm dt$. More generally, \citet{AngstmannGebbie2026} argue that event, operational and calendar time need not determine a unique market representation. Temporal resolution can also distort an inferred causal structure when a unique process is assumed \citep{HyttinenEtAl2016}; the issue here is prior to inference because the event-state dynamics may themselves admit more than one compatible calendar embedding.

Let $\tau_{n+1}>0$ be the waiting time to the next event. For a generic actor state $s\in S_A$, current lower-level state $x$ and selected admissible event-time kernel $K$, let $\mathcal J_{s,K}(x,\cdot)$ denote a realised joint kernel for the next state and waiting time. For a measurable next-state set $B$ and Borel set $I\subset(0,\infty)$,
\begin{equation}
\Pr\!\left(X_{n+1}\in B,\,\tau_{n+1}\in I\mid X_n=x,s_n=s,K_n=K\right)=\mathcal J_{s,K}(x,B\times I).
\label{eq:joint-event-kernel}
\end{equation}
The state marginal determines the event-time transition law and the waiting-time marginal determines when the transition becomes visible in calendar time; no factorisation of these two marginals is assumed. Consistency with the selected event-time kernel requires
\begin{equation}
\mathcal J_{s,K}\!\left(x,B\times(0,\infty)\right)
=K(x,B).
\label{eq:joint-kernel-state-marginal}
\end{equation}
Thus the joint kernel extends rather than replaces the selected event-time law. For calendar time $t\ge0$, set $T_0=0$ and define the event epochs $T_n$, the counting process $N(t)$, and the observed state $Y_t$ by
\begin{equation}
T_n=\sum_{j=1}^{n}\tau_j,\qquad N(t)=\max\{n:T_n\le t\},\qquad Y_t=X_{N(t)}.
\label{eq:event-calendar-map}
\end{equation}
We assume non-explosion, $T_n\to\infty$ almost surely. Equation~\eqref{eq:event-calendar-map} gives the event-to-calendar embedding of the discrete event chain. Direct Bochner subordination is a special continuous-operational-time construction that composes a time-homogeneous operational Markov process with an independent increasing subordinator. Inverse-subordinator and renewal constructions can generate calendar-time memory \citep{MeerschaertStraka2013,AngstmannGebbie2026}. Appendix~\ref{app:event-calendar} separates these cases explicitly.

Equation~\eqref{eq:joint-event-kernel} describes one realised joint event law after the event-time kernel has been selected. Non-unique time means that the event-state dynamics need not fix a unique compatible joint state/waiting-time law before a timing specification is selected. Let $\Theta$ index the compatible timing specifications and write $\mathcal J^{\vartheta}_{s_n,K}$ for the corresponding family of joint laws sharing the selected event-state marginal $K$. For a timing rule $\vartheta$ applied consistently along the event sequence, the resulting sequence of compatible joint laws induces holding times and hence a clock $T^{\vartheta}=(T_n^{\vartheta})_{n\ge0}$, with $N_{\vartheta}(t)=\max\{n:T_n^{\vartheta}\le t\}$. At the induced clock level write
\begin{equation}
\mathfrak T=\{T^{\vartheta}:\vartheta\in\Theta\},
\qquad
Y_t^{\vartheta}=X^{\vartheta}_{N_{\vartheta}(t)}.
\label{eq:clock-family}
\end{equation}
Here $\mathfrak T$ is the induced admissible clock family and $Y_t^{\vartheta}$ is the corresponding calendar-time process. A usable calendar representation may require selection of one representative, or identification of representations that are equivalent for the causal, observational, or financial question at hand. Appendix~\ref{app:compatible-timing} gives the joint-law formulation explicitly.\footnote{A related mathematical construction appears in \citet{Wilcox2014LinearRelations}, where a multivalued linear operator is represented as a linear relation and the multivalued part is removed by a natural quotient to obtain a single-valued operator. Although the market-time problem considered here is not necessarily a linear-relation problem, this is a useful and powerful analogy because a single-valued calendar description may require an explicit selection or quotient of an underlying multivalued relation.}

An admissible family of event kernels is not yet a realised event law, and a fixed event kernel need not determine a unique calendar embedding. Local joint event/timing uniqueness therefore requires uniqueness at both stages: selection of the admissible event kernel and specification of the compatible joint state/waiting-time law; Proposition~\ref{prop:local-event-timing-uniqueness} makes this precise.

At event epochs the calendar-time reflexive map can be written without introducing an arbitrary calendar increment,
\begin{equation}
\begin{aligned}
Y_{T_n}=X_n
&\xrightarrow{\Pi}M_n
\xrightarrow{\mathcal I}s_n
\longrightarrow \mathcal D(s_n)\ni K_n
\\
&\longrightarrow \mathcal J_{s_n,K_n}
\longrightarrow (X_{n+1},\tau_{n+1})
\longrightarrow T_{n+1}
\longrightarrow Y_{T_{n+1}}.
\end{aligned}
\label{eq:calendar-reflexive-map}
\end{equation}
Equation~\eqref{eq:calendar-reflexive-map} composes the closed event-level feedback with its calendar embedding. The higher-level state produced by the event stream conditions the actor state that changes the admissible kernel and, potentially, the waiting-time law from which the next calendar state is formed, even though the primitive transition mechanism remains local in event index.

Figure~\ref{fig:reflexive-event-dynamics} visualises the event-level reflexive mechanism.

\begin{figure}[htbp]
\centering
\begin{tikzpicture}[
  x=1cm,y=1cm,>=Latex,
  every node/.style={font=\footnotesize},
  block/.style={align=center,minimum width=2.15cm,minimum height=0.82cm,inner xsep=2pt},
  lab/.style={font=\footnotesize,fill=white,inner sep=1.0pt,align=center}
]
    \node[font=\footnotesize] at (0,4.90) {event $n$};
    \node[font=\footnotesize] at (4.15,4.90) {event $n+1$};
    \node[block] (sn)  at (0,3.65) {$s_n$\\ actor state};
    \node[block] (snp) at (4.15,3.65) {$s_{n+1}$\\ actor state};
    \node[block] (mn)  at (0,1.82) {$M_n=\Pi(X_n)$\\ aggregate state};
    \node[block] (mnp) at (4.15,1.82) {$M_{n+1}=\Pi(X_{n+1})$\\ aggregate state};
    \node[block] (xn)  at (0,0) {$X_n$\\ event state};
    \node[block] (xnp) at (4.15,0) {$X_{n+1}$\\ event state};
    \coordinate (contend) at (7.225,0);
    \node[font=\footnotesize,inner sep=0pt] (cont) at (7.52,0) {$\cdots$};
    \draw[->] (xn.north) -- node[lab,left=2pt] {$\Pi$} (mn.south);
    \draw[->] (mn.north) -- node[lab,left=2pt] {$\mathcal I$} (sn.south);
    \draw[->] (xnp.north) -- node[lab,right=2pt] {$\Pi$} (mnp.south);
    \draw[->] (mnp.north) -- node[lab,right=2pt] {$\mathcal I$} (snp.south);
    \coordinate (kn) at ($(xn.east)!0.5!(xnp.west)$);
    \coordinate (knp) at ($(xnp.east)!0.5!(contend)$);
    \draw[->] (xn.east) -- node[lab,below=2pt] {$K_n$} (xnp.west);
    \draw[->] (xnp.east) -- node[lab,below=2pt] {$K_{n+1}$} (contend);
    \draw[->,dashed,bend left=8]
      (sn.south east)
      to node[lab,pos=0.66,text width=2.25cm,xshift=-0.10cm]
      {$\mathcal D(s_n)\ni K_n$}
      (kn);
    \draw[->,dashed,bend left=8]
      (snp.south east)
      to node[lab,pos=0.66,text width=2.35cm,xshift=-0.03cm]
      {$\mathcal D(s_{n+1})\ni K_{n+1}$}
      (knp);
\end{tikzpicture}
\caption{Reflexive event dynamics. The event state $X_n$ aggregates to $M_n$, which is interpreted as the actor state $s_n$. Reflexivity is explicit in $\mathcal D(s_n)\ni K_n$: the actor state generated from the current event history restricts the kernel governing the next event. The same restriction recurs at event $n+1$, where $s_{n+1}$ restricts $K_{n+1}$, making the forward-time reflexive iteration explicit.}
\label{fig:reflexive-event-dynamics}
\end{figure}

Figure~\ref{fig:reflexive-calendar-embedding} extends the same reflexive mechanism to compatible calendar embeddings.

\begin{figure}[t]
\centering
\begin{tikzpicture}[
  x=1cm,y=1cm,>=Latex,
  every node/.style={font=\footnotesize},
  lab/.style={font=\footnotesize,fill=white,inner sep=1.0pt,align=center},
  desc/.style={font=\footnotesize,align=center},
  panel/.style={font=\footnotesize\bfseries,anchor=west}
]
    \node[panel] at (0.0,7.75) {(a) Reflexive timing specification};

    \node (mn) at (0.55,6.52) {$M_n$};
    \node[desc,anchor=west] (mndesc) at ([yshift=-0.54cm]mn.west) {aggregate state};

    \node[desc] (actordesc) at (2.75,7.02) {actor state};
    \node (actor) at (2.75,6.52) {$s_n=\mathcal I(M_n)$};

    \node (restricted) at (5.15,6.52) {$\mathcal D(s_n)\ni K_n$};
    \node[desc] (restricteddesc) at (5.15,5.98) {admissible event kernel};

    \node (timingfamily) at (7.55,6.52) {$\mathfrak J(s_n,K_n)$};
    \node[desc,anchor=east] (familydesc) at ([yshift=0.50cm]timingfamily.east) {compatible timing family};

    \draw[->] (mn) -- (actor);
    \draw[->,dashed] (actor) -- (restricted);
    \draw[->] (restricted) -- (timingfamily);

    \node[align=center,text width=8.10cm] at (4.05,5.25)
      {$\vartheta_j\in\Theta,\quad
        \mathcal J^{\vartheta_j}_{s_n,K_n}\in\mathfrak J(s_n,K_n),\quad
        j=1,2,$ choices used in (b).};

    \node[panel] at (0.0,4.25) {(b) Compatible calendar embeddings};
    \node at (1.10,3.55) {event $n$};
    \node at (7.00,3.55) {event $n+1$};

    \node (j1n)  at (1.10,2.70) {$(X_n,T_n^{\vartheta_1})$};
    \node (j1np) at (7.00,2.70) {$(X_{n+1},T_{n+1}^{\vartheta_1})$};
    \node (xn)   at (1.10,1.25) {$X_n$};
    \node (xnp)  at (7.00,1.25) {$X_{n+1}$};
    \node (j2n)  at (1.10,-0.25) {$(X_n,T_n^{\vartheta_2})$};
    \node (j2np) at (7.00,-0.25) {$(X_{n+1},T_{n+1}^{\vartheta_2})$};

    \draw[->] (j1n) -- node[lab,above=2pt]
      {induced by $\mathcal J^{\vartheta_1}_{s_n,K_n}$} (j1np);
    \draw[->] (xn) -- node[lab,below=2pt]
      {selected event law $K_n$} (xnp);
    \draw[->] (j2n) -- node[lab,below=2pt]
      {induced by $\mathcal J^{\vartheta_2}_{s_n,K_n}$} (j2np);

    \draw[->] (j1n) -- node[lab,right=2pt]
      {state marginal ($\vartheta_1$)} (xn);
    \draw[->] (j1np) -- node[lab,left=2pt]
      {state marginal ($\vartheta_1$)} (xnp);
    \draw[->] (j2n) -- node[lab,right=2pt]
      {state marginal ($\vartheta_2$)} (xn);
    \draw[->] (j2np) -- node[lab,left=2pt]
      {state marginal ($\vartheta_2$)} (xnp);
\end{tikzpicture}
\caption{Reflexive calendar embedding. (a) The aggregate state $M_n$ determines the actor state $s_n=\mathcal I(M_n)$. The top-down constraint restricts the selected event kernel to $K_n\in\mathcal D(s_n)$; together $s_n$ and $K_n$ determine the compatible timing family $\mathfrak J(s_n,K_n)$ in Eq.~\eqref{eq:compatible-joint-family}. (b) Two choices $\vartheta_1,\vartheta_2\in\Theta$ give distinct joint state/waiting-time kernels $\mathcal J^{\vartheta_j}_{s_n,K_n}\in\mathfrak J(s_n,K_n)$. Both have event-state marginal $K_n$ by Eq.~\eqref{eq:compatible-state-marginal}, but may generate different holding times and calendar event epochs, giving different calendar embeddings through Eq.~\eqref{eq:compatible-calendar-embedding}; see Appendix~\ref{app:compatible-timing}. In the deterministic operational-clock specialisation of \citet{AngstmannGebbieAdjoint2026}, calendar time $t$ maps to operational time $u=U(t)$ with activity rate $\alpha(t)=\mathrm dU(t)/\mathrm dt$, while HC event coordination is $n_\ell=U_\ell(m)$.}
\label{fig:reflexive-calendar-embedding}
\end{figure}

A calendar coordinate remains necessary for valuation, risk and comparison, but imposing synchronous global updating at the outset can obscure how hierarchical constraint changes event ordering, waiting times and the observed feedback.

\section{Markovianity and memory}
\label{sec:markov-memory}

Suppose the augmented event state $Z_n$ takes values in a measurable state space $(\mathsf Z,\mathcal Z)$ and contains the lower-level state together with any selection and clock variables required to determine the joint event kernel. The actor state is recovered from $s_n=\mathcal I(\Pi(X_n))$. The event-indexed system is Markov when, for every $B\in\mathcal Z$,
\begin{equation}
\Pr(Z_{n+1}\in B\mid Z_0,\ldots,Z_n)
=
\Pr(Z_{n+1}\in B\mid Z_n).
\label{eq:augmented-markov}
\end{equation}
The finite Markov kernels used in \citet{Gebbie2026} give a simple example of this reduction, with higher-level actor instances restricting the transitions that remain admissible.

Markovianity of the observed calendar-time process depends on the clock and on which variables are retained. A deterministic clock can preserve a Markov representation, and independent direct subordination of a Markov semigroup by a subordinator yields another Markov semigroup. Renewal or inverse-time constructions generally retain additional age or waiting-time information and can produce non-Markovian calendar-time equations; endogenous dependence of the clock on actor or constraint states adds another source of history dependence \citep{MeerschaertStraka2013,AngstmannGebbie2026}. Under standard semi-Markov conditions a Markov representation may be recovered by enlarging the calendar state to include an age or residual-time variable. The relevant distinction is between Markovianity of a sufficient enlarged state and Markovianity of the observed calendar projection. Reflexivity is therefore compatible with Markovian local event dynamics.

\section{Discussion}
\label{sec:discussion}

Lower-level event dynamics generate an aggregate state that an actor or institutional role interprets to restrict the transition kernels that remain admissible. The same actor state may also alter the joint waiting-time law by which those events enter calendar time. This gives a hierarchical-causal reading of the malleable causal chains described by \citet{PolakowGebbieFlint2026}: the admissible law itself depends on actor state. An admissible family is not yet a realised event law, and a realised event law does not by itself determine its calendar embedding. If either correspondence remains set-valued, the event-level specification need not determine a canonical calendar-time causal law, even when a sufficient event state is Markovian.

The main claim is that reflexivity is realised through endogenous top-down constraint: the higher-level state produced by lower-level dynamics changes the rules under which subsequent lower-level dynamics are generated. Non-unique time is a separate representation problem because compatible event dynamics need not determine a canonical global calendar description. Projection, clock state, age dependence or non-unique timing can then introduce memory or causal ambiguity in the observed calendar process without requiring non-Markov primitive event dynamics.

Two implications follow. First, invariant causal descriptions become conditional on the actor-level constraint. Standard econometric causal models ordinarily specify a structural or predictive relation and ask how variables behave under intervention or through time. Here $M_n$ enters $s_n$, which changes the correspondence $\mathcal D(s_n)$ of admissible lower-level transition laws even when an enlarged event state admits a time-homogeneous Markov representation. Interventions on actor roles, institutional constraints or market organisation may therefore change the transition correspondence rather than only the observed market variables. Calendar-time causal discovery that omits this state dependence can admit competing descriptions or misattribute a change in the transition correspondence to an observed-state effect. Hierarchical reflexivity differs accordingly from feedback introduced only through an additional drift, force or response term: the higher level changes the admissible lower-level dynamics.

Second, separating event dynamics from their calendar embedding creates a representation problem for quantitative finance. A selected event kernel $K$ need not determine a unique joint law for the next state and its waiting time, and hence need not determine a canonical calendar-time price process. This precedes the ordinary question of market completeness or uniqueness of an equivalent pricing measure because it concerns which calendar representation is being priced. Once a traded-market structure is specified, unresolved clock or projection risk may contribute to incompleteness; \citet{AngstmannGebbieAdjoint2026} use adjoint consistency between forward density evolution and backward valuation operators to restrict the admissible projection class. The same separation allows memory to appear in calendar observations without intrinsic long-range dependence in the event-indexed dynamics. Non-exponential or state-dependent holding times lead naturally to a semi-Markov description, while projection from $(\widehat Z_t,e_t)$ to $Y_t$ can discard variables on which the future law depends. Persistent or heavy-tailed timing can therefore generate calendar-time memory or heavy-tailed increments; volatility clustering additionally requires that persistence in the clock, state dependence or projection mechanism be transmitted into observed returns or volatility.

\section*{Acknowledgements}

I thank Daniel Polakow, George Ellis, Chris Angstmann, Diane Wilcox, and Andrew Paskaramoorthy for various conversations about causation, hierarchies and feedbacks in financial and complex systems.

\section*{AI Disclaimer}

OpenAI ChatGPT (GPT-5.6 Sol, accessed August 2026) was used for language editing, consistency checks, \LaTeX{} preparation and diagram refinement. I critically reviewed and edited all outputs and remain responsible for the mathematical formulation, interpretation and final text. 

\section*{Data availability}

No new data were created or analysed in this study.

\section*{Conflict of Interest}

I declare no conflict of interest.

\bibliography{references-v4.4.0}

\clearpage
\appendix
\section{Set-valued constraints}
\label{app:set-valued}

Let $(\mathsf X,\mathcal X)$ be the lower-level measurable state space, let $X_n\in\mathsf X$ denote the lower-level state at event $n$, let $(\mathsf M,\mathcal M)$ be the aggregate-state space, let $(S_A,\mathcal S_A)$ be the actor-state space, and let $\mathcal K(\mathsf X)$ be the set of Markov transition kernels on $\mathsf X$. In Hierarchical Causality \citep{Gebbie2026}, the fixed actor instance $A$ induces a correspondence $D^A_{r\to\ell}$ from higher-level states to admissible target-subsystem kernels. We assume here that this dependence acts through the actor-state sequence $(s_n)_{n\ge0}$ with $s_n=\mathcal I(M_n)$ and write $\mathcal D(s_n)$ for the resulting admissible family. Kernels are written on the full state space $\mathsf X$ for economy; the same correspondence may constrain the relevant local, restricted, marginal or conditional kernel section on the target subsystem.

Let $\Pi:(\mathsf X,\mathcal X)\to(\mathsf M,\mathcal M)$ be the measurable aggregation map and $\mathcal I:(\mathsf M,\mathcal M)\to(S_A,\mathcal S_A)$ the measurable interpretation map. For a lower-level state $x\in\mathsf X$, write $s\in S_A$ for the corresponding generic actor-state value,
\begin{equation}
s=\mathcal I\!\left(\Pi(x)\right).
\label{eq:appendix-actor-map}
\end{equation}
The higher-level constraint is represented by a non-empty correspondence
\begin{equation}
\mathcal D:S_A\rightrightarrows\mathcal K(\mathsf X),
\qquad
\mathcal D(s)\neq\varnothing,
\label{eq:appendix-correspondence}
\end{equation}
where $\mathcal D(s)$ is the set of lower-level transition kernels that remain admissible when the actor state has value $s$. The correspondence describes which event laws remain admissible; it is not itself a realised transition law. A realised stochastic model requires one admissible kernel to be selected or otherwise determined.
Set-valued transition laws also arise in imprecise Markov chains, where uncertainty is represented by credal sets of transition probabilities \citep{deCoomanEtAl2009}. Action-dependent families of transition laws are likewise standard in Markov decision processes \citep{Puterman1994}. Here the set is conditioned by actor state and is given a hierarchical causal interpretation; no reward or optimisation structure is assumed.

\begin{definition}[Selection]
A state-based selection is one possible resolution of the admissible family. It is a map
\begin{equation}
\sigma:S_A\longrightarrow\mathcal K(\mathsf X),
\label{eq:appendix-selection}
\end{equation}
where $\sigma(s)\in\mathcal D(s)$ for each $s\in S_A$. For the induced stochastic process, we require $(s,x)\mapsto \sigma(s)(x,B)$ to be $\mathcal S_A\otimes\mathcal X$-measurable for every $B\in\mathcal X$. No general measurable-selection theorem is assumed here. When a realised process is considered, a selection satisfying this measurability condition is treated as part of the model specification.
\end{definition}

Once such a selection has been fixed, the admissible family has been resolved into a single-valued next-event kernel. For $x\in\mathsf X$ and $B\in\mathcal X$,
\begin{equation}
\Pr(X_{n+1}\in B\mid X_n=x)
=
\sigma\!\left(\mathcal I(\Pi(x))\right)(x,B).
\label{eq:appendix-selected-law}
\end{equation}
The selected kernel gives a single-valued next-event transition law.

\begin{proposition}
Suppose $\mathcal D(s)$ is non-empty for every $s\in S_A$.
\begin{enumerate}[label=(\roman*)]
\item If $\mathcal D(s)$ is a singleton for every $s$, then the actor state determines a unique lower-level transition kernel.
\item If, for some $x\in\mathsf X$ and its actor state $s=\mathcal I(\Pi(x))$, there are $K_1,K_2\in\mathcal D(s)$ with $K_1(x,B)\neq K_2(x,B)$ for some $B\in\mathcal X$, then the actor state alone does not determine a unique next-event law. A selection or further state variable is required.
\end{enumerate}
\label{prop:selection}
\end{proposition}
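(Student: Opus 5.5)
Both parts follow from unpacking the definitions of the correspondence $\mathcal D$ in Eq.~\eqref{eq:appendix-correspondence} and of a selection in Eq.~\eqref{eq:appendix-selection}. The plan is to use the selected-law formula Eq.~\eqref{eq:appendix-selected-law} as the bridge between ``kernel chosen from $\mathcal D(s)$'' and ``next-event law at $x$.'' No measurable-selection theorem is needed for either part.

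For (i), I would note that if $\mathcal D(s)=\{K_s\}$ for every $s\in S_A$, then every selection satisfies $\sigma(s)\in\mathcal D(s)$ and is therefore forced to equal $\sigma(s)=K_s$. The map $s\mapsto K_s$ is then the unique selection. By Eq.~\eqref{eq:appendix-selected-law}, the next-event law at $x$ is $K_{\mathcal I(\Pi(x))}(x,\cdot)$, which depends only on the actor state $s=\mathcal I(\Pi(x))$ and the current state $x$. Uniqueness here means uniqueness of the kernel assigned to each actor state. Measurability of $(s,x)\mapsto K_s(x,B)$ is a modelling requirement rather than something the statement asserts, so I would remark that it is assumed when a realised process is formed, consistent with the Definition.

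For (ii), I would build two selections that agree everywhere except at the given actor state. Fix a selection $\sigma_0$; one exists pointwise because $\mathcal D(s)\neq\varnothing$, using choice. Define $\sigma_j(s')=\sigma_0(s')$ for $s'\neq s$ and $\sigma_j(s)=K_j$, for $j=1,2$. Both are selections, since $K_j\in\mathcal D(s)$. Through Eq.~\eqref{eq:appendix-selected-law} they give
\[
\Pr\nolimits_{\sigma_1}(X_{n+1}\in B\mid X_n=x)=K_1(x,B)\neq K_2(x,B)=\Pr\nolimits_{\sigma_2}(X_{n+1}\in B\mid X_n=x).
\]
Both resolutions are admissible given the same actor state $s$, so the value of $s$ together with $\mathcal D$ cannot fix the next-event law at $x$. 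Consequently some extra datum is logically required, either a selection rule or a further state variable that indexes which element of $\mathcal D(s)$ operates.

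The only delicate point is measurability of the modified selections $\sigma_j$. Changing the selection at a single point $s$ preserves joint measurability if singletons are measurable in $S_A$, since the modification is then on the measurable set $\{s\}\times\mathsf X$. I expect this to be the main obstacle. My first approach is to assume $\mathcal S_A$ contains singletons, the standard case for countably generated spaces. Failing that, I would observe that part (ii) is a statement about the conditional law at the single point $x$, so the non-uniqueness argument can be made pointwise without requiring the full measurable-selection condition.
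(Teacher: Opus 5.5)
Your argument is correct, and part (i) coincides with the paper's. For (ii), the paper stays at the level of the admissible set: $x$ fixes $s=\mathcal I(\Pi(x))$, and two members of $\mathcal D(s)$ assign different mass to $B$ from $x$, so $s$ cannot fix the next-event law. You instead lift $K_1,K_2$ to global selections $\sigma_1,\sigma_2$ that differ only at $s$, and then compare the induced processes. This gives a slightly stronger reading: the ambiguity survives as two fully specified admissible processes, not just two admissible kernels. But it needs assumptions the paper deliberately avoids, and you have put the obstacle in the wrong place. The one-point modification is harmless once $\{s\}\in\mathcal S_A$. The real issue is $\sigma_0$ itself. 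Choice gives a map with $\sigma_0(s')\in\mathcal D(s')$, but not the joint $\mathcal S_A\otimes\mathcal X$-measurability of $(s',x)\mapsto\sigma_0(s')(x,B)$. The Definition imposes that measurability before Eq.~\eqref{eq:appendix-selected-law} describes a process, and the paper explicitly assumes no measurable-selection theorem. In addition, comparing the two conditional probabilities at a single point $x$ only has content through the kernels, because conditional probabilities given $X_n$ are fixed only up to null sets of the law of $X_n$. The substance of your comparison is therefore already $K_1(x,B)\neq K_2(x,B)$. Your pointwise fallback is exactly the paper's proof and should be the main argument. The two-selection construction is at most a remark, valid when a measurable selection exists.
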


\begin{proof}
The first statement follows directly from singleton-valuedness. For the second, the state $x$ generates the actor state $s=\mathcal I(\Pi(x))$, and that same actor state admits two kernels assigning different probabilities to the same measurable next-state event $B$ from $x$. The next-event law is therefore not fixed by $s$ alone.
\end{proof}

The proposition concerns only the distinction between admissibility and a realised next-event law. Existence of a measurable selection and empirical distinguishability of different admissible kernels require additional assumptions. No linear structure is assumed for $\mathcal D$.

\section{Event and calendar time}
\label{app:event-calendar}

Let $(\mathsf Z,\mathcal Z)$ be the sufficient event-state space and let $Z_n\in\mathsf Z$ contain the event-level variables needed to determine both the next state and its waiting time, including any selection or clock state when these variables affect the event law. The actor state is recovered from the lower-level state through $s_n=\mathcal I(\Pi(X_n))$.

The purpose of the augmented event state is to distinguish Markovianity in event index from Markovianity after calendar embedding. The former does not by itself imply the latter. Let $\tau_{n+1}>0$ denote the holding time from event $n$ to event $n+1$, let $\mathcal Q$ denote the joint kernel for the next sufficient event state and holding time, and let $\mathcal F_n$ be the $\sigma$-field generated by $(Z_0,\tau_1,Z_1,\ldots,\tau_n,Z_n)$. For $E\in\mathcal Z$ and Borel $I\subset(0,\infty)$,
\begin{equation}
\Pr\!\left(Z_{n+1}\in E,\,\tau_{n+1}\in I\mid\mathcal F_n\right)
=\mathcal Q(Z_n,E\times I).
\label{eq:appendix-joint-kernel}
\end{equation}
This is a Markov-renewal specification in the sense of \citet{Pyke1961}. The kernel $\mathcal Q$ is the sufficient-state representation of the body kernel $\mathcal J$ once the variables needed to determine the selected event law and timing specification are included in $Z_n$. For calendar time $t\ge0$, define the event epochs $T_n$ and the counting process $N(t)$ by
\begin{equation}
T_0=0,\qquad
T_n=\sum_{j=1}^{n}\tau_j,\qquad
N(t)=\max\{n:T_n\le t\}.
\label{eq:appendix-event-epochs}
\end{equation}
We assume $T_n\to\infty$ almost surely. Define the calendar process on the sufficient event state and its age by
\begin{equation}
\widehat Z_t=Z_{N(t)},
\qquad
e_t=t-T_{N(t)},
\label{eq:appendix-age-process}
\end{equation}
where $e_t$ is the elapsed time since the most recent event. Let $\pi_X:\mathsf Z\to\mathsf X$ denote the lower-level projection. The lower-level observable is $Y_t=\pi_X(\widehat Z_t)$.
There are therefore three potentially different objects: the event-indexed state $Z_n$, the calendar-time state augmented by its age, and the lower-level observable $Y_t$. Markovianity of one does not automatically imply Markovianity of the next.

\begin{proposition}
Suppose $\mathcal Q$ is time homogeneous, the waiting times are strictly positive, and the event process is non-explosive.
\begin{enumerate}[label=(\roman*)]
\item The calendar process $\widehat Z_t$ is semi-Markov. Under standard measurability and regularity conditions for the semi-Markov kernel, $(\widehat Z_t,e_t)$ has a Markov representation.
\item If, conditional on the current event state, the residual law of the next jump time and destination is independent of the current age, then the age variable is unnecessary and $\widehat Z_t$ is Markov in that state. For general holding-time laws it need not be.
\item A further projection $Y_t=\pi_X(\widehat Z_t)$ is not automatically Markov. It is Markov when the projected state is sufficient for the future law; clock state or age can otherwise remain dynamically relevant.
\end{enumerate}
\label{prop:calendar-markov}
\end{proposition}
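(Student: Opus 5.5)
The plan is to treat $(Z_n,\tau_n)$ as a time-homogeneous Markov renewal process with kernel $\mathcal Q$ in Eq.~\eqref{eq:appendix-joint-kernel}, and to read each part off the standard semi-Markov construction of \citet{Pyke1961}. Strict positivity of waiting times gives that $T_n$ is strictly increasing. Non-explosion gives that $N(t)$ is finite for every $t$. Hence $\widehat Z_t=Z_{N(t)}$ and $e_t=t-T_{N(t)}$ are well defined, right-continuous, piecewise-constant (respectively piecewise-linear) paths.

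For (i), semi-Markovianity is essentially definitional. The embedded jump chain is $Z_n$ and the holding times have joint law $\mathcal Q(Z_n,\cdot)$ given $\mathcal F_n$, which is exactly the semi-Markov specification. For the Markov representation of $(\widehat Z_t,e_t)$, I would fix $t$ and condition on $\{\widehat Z_t=z,\,e_t=a\}$. On this event the last jump was at $t-a$ and the next jump has not yet occurred. By the strong Markov property of the renewal sequence at the stopping index $N(t)$, the future after $t$ depends on the past only through $z$ and the conditioning $\tau_{N(t)+1}>a$. The resulting residual kernel is
\[
\Pr(Z_{N(t)+1}\in E,\ \tau_{N(t)+1}-a\in I\mid \widehat Z_t=z,e_t=a)
=\frac{\mathcal Q\bigl(z,E\times(I+a)\bigr)}{\mathcal Q\bigl(z,\mathsf Z\times(a,\infty)\bigr)},
\]
defined whenever the denominator is positive. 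After the next jump the process regenerates with $\mathcal Q$. This kernel depends only on $(z,a)$, which yields the Markov property of the pair.

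The main obstacle is making this conditioning rigorous. The event $\{N(t)=n\}$ must be decomposed over $n$, and the ratio above must be shown to define a measurable kernel in $(z,a)$, including on the null set where the survivor function vanishes. That is precisely what ``standard measurability and regularity conditions'' is meant to absorb. I would therefore state the needed condition explicitly: a regular version of $a\mapsto\mathcal Q(z,\cdot\times(a,\infty))$. I would then cite the classical result rather than reprove it.

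For (ii), the hypothesis says the residual kernel displayed above does not depend on $a$. Then the transition law of $(\widehat Z_t,e_t)$ projects to a law depending on $\widehat Z_t$ alone, and Dynkin's lumpability criterion gives the Markov property of $\widehat Z_t$. This is the exponential or memoryless case. For the ``need not'' clause, I would give a two-state counterexample with deterministic holding time $1$. There, $\Pr(\text{jump in }(t,t+h]\mid \widehat Z_t)$ depends on the age, so $\widehat Z_t$ is not Markov.

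For (iii), the same lumpability argument applies with $\pi_X$ in place of dropping $e_t$. If the conditional future law given $(\widehat Z_t,e_t)$ is a function of $\pi_X(\widehat Z_t)$ only, then $Y_t$ is Markov. Otherwise a counterexample suffices. I would take $Z=(X,c)$, where $c$ is a clock or selection component that changes the holding-time law while $X$ is unchanged. Then the past of $Y$ carries information about $c$, and hence about the future.
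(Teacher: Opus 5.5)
Your proposal follows the same route as the paper's proof. You condition on the age to get the residual joint law for (i), use age-independence of that law for (ii), and note that projection can discard dynamically relevant variables for (iii). You fill in much more than the paper's three-line argument does, and the main line is sound. Two steps need repair, however.

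The minor one concerns your appeal to the strong Markov property. $N(t)$ is not a stopping index for $(\mathcal F_n)$, because $\{N(t)=n\}=\{T_n\le t\}\cap\{\tau_{n+1}>t-T_n\}$ involves $\tau_{n+1}$; only $N(t)+1$ is a stopping time. Your residual-kernel formula is still correct, but it comes from the decomposition you mention next. On each $\{N(t)=n\}$, apply Eq.~\eqref{eq:appendix-joint-kernel} at the fixed index $n$, using that $T_n$ is $\mathcal F_n$-measurable, and condition on $\tau_{n+1}>t-T_n$. The strong Markov property is then needed only at $N(t)+1$, for the regeneration step.

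The more substantive issue is your counterexample for the \emph{need not} clause of (ii). With $T_0=0$ and unit holding times, every jump occurs at an integer, so $e_t=t-\lfloor t\rfloor$ is deterministic and $\widehat Z_t=Z_{\lfloor t\rfloor}$. Since the embedded chain $(Z_n)$ is Markov, the future given the past of $\widehat Z$ depends only on $\widehat Z_t$ and the known time $t$. So $\widehat Z$ is a time-inhomogeneous Markov process, consistent with the paper's remark in Section~\ref{sec:markov-memory} that a deterministic clock can preserve a Markov representation. The age dependence you exhibit is really dependence on $t$: it refutes time-homogeneity and nothing more.

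To defeat the Markov property itself, you need an age that is not a function of $(\widehat Z_t,t)$ but is revealed by the observed past. For example, take an alternating two-state chain with $Z_0=0$ and $\tau_j$ i.i.d.\ uniform on $[1,2]$. On $\{\widehat Z_{1.5}=1\}$ exactly one jump has occurred, at the observed time $\tau_1\in[1,1.5]$. Then $\Pr(\widehat Z_{2.6}=1\mid \widehat Z_s,\ s\le 1.5)=\tau_1-0.6$, which is not a function of $\widehat Z_{1.5}$.
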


\begin{proof}
Equation~\eqref{eq:appendix-joint-kernel} makes the next event state and holding time conditionally depend on the event history only through the current sufficient event state. Conditioning additionally on $e_t$ determines the residual joint law of the next jump time and destination, giving the Markov representation in (i). Age-independence of that residual law gives (ii). The last statement follows because projection can discard variables on which the conditional future law still depends.
\end{proof}

Direct subordination is more restrictive. Let $\widetilde X_u$ be a time-homogeneous Markov process with semigroup $(P_u)_{u\ge0}$. Let $S_t$ be an independent subordinator with law $\mu_t$ and Laplace exponent $\phi$. For Laplace variable $\lambda\ge0$, write $\mathbb E[e^{-\lambda S_t}]=e^{-t\phi(\lambda)}$. For bounded measurable $f$ and initial state $x$, with $\mathbb E_x$ denoting expectation conditional on $\widetilde X_0=x$,
\begin{equation}
P_t^{\phi}f(x)
=\mathbb E_x\!\left[f(\widetilde X_{S_t})\right]
=\int_{0}^{\infty}P_u f(x)\,\mu_t(\mathrm du).
\label{eq:appendix-direct-subordination}
\end{equation}
Here $(P_t^{\phi})_{t\ge0}$ is the directly subordinated semigroup in the Bochner sense \citep{Schilling1998}. An inverse subordinator or a renewal clock does not in general have the same semigroup composition and can generate calendar-time memory \citep{MeerschaertStraka2013}. A time-inhomogeneous event model similarly requires an augmented-state or evolution-family formulation before the Markov-semigroup statement can be used.

Selection of an admissible event law from $\mathcal D(s_n)$ is distinct from selection of a compatible joint state/waiting-time law and its induced event-to-calendar embedding. These are distinct sources of ambiguity. Selection from $\mathcal D(s_n)$ concerns which event law is realised. Selection from, or identification within, a compatible family of joint state/waiting-time laws concerns how that event law enters calendar time. In neither case does event-time Markovianity alone settle the calendar-time representation.

\section{Compatible timing}
\label{app:compatible-timing}

Fix an event $n\ge0$ and its actor state $s_n\in S_A$, a selected event-state transition kernel $K\in\mathcal D(s_n)$, and a current lower-level state $x\in\mathsf X$.

Fixing the event-state kernel does not in general fix its calendar embedding. The same kernel $K$ may be the state marginal of several joint laws for the next state and its waiting time. Let $\Theta$ index timing specifications compatible with that selected event-state law. Define the compatible family of joint state/waiting-time kernels by
\begin{equation}
\mathfrak J(s_n,K)
=
\left\{\mathcal J^{\vartheta}_{s_n,K}:\vartheta\in\Theta\right\}.
\label{eq:compatible-joint-family}
\end{equation}
Each $\mathcal J^{\vartheta}_{s_n,K}$ is a kernel from $\mathsf X$ to $\mathsf X\times(0,\infty)$ and must preserve the selected event-state marginal. Thus, for every $B\in\mathcal X$,
\begin{equation}
\mathcal J^{\vartheta}_{s_n,K}
\!\left(x,B\times(0,\infty)\right)
=
K(x,B),
\qquad \vartheta\in\Theta.
\label{eq:compatible-state-marginal}
\end{equation}
The members of $\mathfrak J(s_n,K)$ therefore share the same selected event-state marginal while differing in their holding-time marginal, in the dependence between the next state and its holding time, or in both. A fixed event kernel need not, for this reason, determine a unique calendar-time law.

For the fixed actor state $s_n$, collect both stages of local specification into
\begin{equation}
\mathfrak C(s_n)
=
\bigcup_{K\in\mathcal D(s_n)}\mathfrak J(s_n,K),
\label{eq:composite-event-timing-family}
\end{equation}
where each member of $\mathfrak C(s_n)$ is a joint state/waiting-time kernel whose state marginal is one admissible event kernel.

\begin{proposition}[Local event/timing uniqueness]
Fix $n\ge0$ and $s_n\in S_A$. Suppose $\mathcal D(s_n)$ is non-empty and $\mathfrak J(s_n,K)$ is non-empty for every $K\in\mathcal D(s_n)$. Then $\mathfrak C(s_n)$ is a singleton if and only if there is a unique admissible event kernel $K^\star$ and a unique compatible joint state/waiting-time kernel for it,
\begin{equation}
\mathcal D(s_n)=\{K^\star\},
\qquad
\mathfrak J(s_n,K^\star)=\{\mathcal J^\star\}.
\label{eq:local-event-timing-uniqueness}
\end{equation}
Consequently, non-uniqueness of either the admissible event kernel or the compatible timing law is sufficient to leave the local joint event/timing specification non-unique.
\label{prop:local-event-timing-uniqueness}
\end{proposition}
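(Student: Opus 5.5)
The proof is a direct set-theoretic argument on the union in Eq.~\eqref{eq:composite-event-timing-family}, using the marginal constraint in Eq.~\eqref{eq:compatible-state-marginal}. The key fact is that each member of $\mathfrak C(s_n)$ remembers the admissible kernel it came from. Write $\mathrm{mar}(\mathcal J)(x,B)=\mathcal J(x,B\times(0,\infty))$. By Eq.~\eqref{eq:compatible-state-marginal}, every $\mathcal J\in\mathfrak J(s_n,K)$ satisfies $\mathrm{mar}(\mathcal J)=K$.

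So if $K_1\neq K_2$ in $\mathcal D(s_n)$, then $\mathfrak J(s_n,K_1)\cap\mathfrak J(s_n,K_2)=\varnothing$: a common member would have two different state marginals. Hence the union defining $\mathfrak C(s_n)$ is disjoint, and the marginal map $\mathrm{mar}:\mathfrak C(s_n)\to\mathcal D(s_n)$ is well defined. Under the non-emptiness hypothesis on every $\mathfrak J(s_n,K)$, this map is also surjective.

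\emph{Sufficiency.} If $\mathcal D(s_n)=\{K^\star\}$ and $\mathfrak J(s_n,K^\star)=\{\mathcal J^\star\}$, the union has a single term, so $\mathfrak C(s_n)=\{\mathcal J^\star\}$.

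\emph{Necessity.} Suppose $\mathfrak C(s_n)=\{\mathcal J^\star\}$.
\begin{enumerate}[label=(\alph*)]
\item Because $\mathrm{mar}$ is surjective, $\mathcal D(s_n)$ is the image of a singleton, so $\mathcal D(s_n)=\{K^\star\}$ with $K^\star=\mathrm{mar}(\mathcal J^\star)$.
\item Then $\mathfrak C(s_n)=\mathfrak J(s_n,K^\star)$, which is non-empty and contained in a singleton. Hence $\mathfrak J(s_n,K^\star)=\{\mathcal J^\star\}$.
\end{enumerate}
The ``consequently'' clause is the contrapositive. If $|\mathcal D(s_n)|\ge2$, surjectivity gives $|\mathfrak C(s_n)|\ge2$. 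If some $\mathfrak J(s_n,K)$ has two elements, these lie in $\mathfrak C(s_n)$ directly.

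The main point to handle carefully is identity of kernels. Equality in $\mathcal K(\mathsf X)$ and among joint kernels should be understood pointwise, as equality of the set functions for all $x$ and all $B$ (or $B\times I$). With that convention, distinct $K$'s differ at some $(x,B)$, and the marginal computation shows that their timing families cannot share an element.

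It should also be made explicit that $\mathfrak J(s_n,K)$ is a set of kernels rather than the indexed family $\vartheta\mapsto\mathcal J^\vartheta$. Two indices $\vartheta$ giving the same kernel therefore count once. Without non-emptiness of each $\mathfrak J(s_n,K)$, necessity would fail, since an admissible kernel with no compatible timing would vanish from the union; this is exactly where that hypothesis is used.
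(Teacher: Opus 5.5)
Your proof is correct and follows the paper's route. Sufficiency comes from collapsing the union in Eq.~\eqref{eq:composite-event-timing-family}. Necessity comes from the fact that a joint kernel has a unique state marginal, so distinct admissible kernels cannot contribute the same member; this forces first $\mathcal D(s_n)$ and then $\mathfrak J(s_n,K^\star)$ to be singletons. Your marginal-map formulation also makes explicit two things the paper's shorter argument uses without stating them: the non-emptiness of each $\mathfrak J(s_n,K)$ (your surjectivity step), and reading Eq.~\eqref{eq:compatible-state-marginal} as an identity of kernels for all $x$ rather than only at the fixed $x$.
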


\begin{proof}
If both sets in Eq.~\eqref{eq:local-event-timing-uniqueness} are singletons, Eq.~\eqref{eq:composite-event-timing-family} gives $\mathfrak C(s_n)=\{\mathcal J^\star\}$. Conversely, suppose $\mathfrak C(s_n)$ is a singleton. Every $\mathcal J\in\mathfrak J(s_n,K)$ has state marginal $K$ by Eq.~\eqref{eq:compatible-state-marginal}. Distinct admissible kernels therefore cannot contribute the same joint kernel, since a joint kernel has a unique state marginal. Hence $\mathcal D(s_n)$ is a singleton. With its unique member $K^\star$ fixed, $\mathfrak J(s_n,K^\star)$ must also be a singleton.
\end{proof}

For a timing rule $\vartheta$ applied along the event sequence, let $(X_n^{\vartheta},\tau_n^{\vartheta})$ denote a realisation generated by the corresponding sequence of compatible joint kernels. Define
\begin{equation}
\begin{aligned}
T_0^{\vartheta}&=0,
& T_n^{\vartheta}&=\sum_{j=1}^{n}\tau_j^{\vartheta},\\
N_{\vartheta}(t)&=\max\{n:T_n^{\vartheta}\le t\},
& Y_t^{\vartheta}&=X_{N_{\vartheta}(t)}^{\vartheta}.
\end{aligned}
\label{eq:compatible-calendar-embedding}
\end{equation}
The superscript records that equality of the event-state marginal does not require the different timing specifications to use the same realised event chain. No common coupling across $\vartheta$ is assumed.

The construction is local. It identifies the joint timing laws compatible with a fixed actor state $s_n$ and selected event kernel. It does not prescribe how admissible kernels or compatible timing laws are resolved as the actor state evolves, nor does it settle how the resulting local descriptions should be related to a global calendar representation. Although the event-state marginal is held fixed, the superscript is retained because the corresponding state chains need not be identical realisations under different joint state/waiting-time laws. The remaining issue is therefore a local-to-global one: how the compatible timing descriptions are related to a global calendar representation is left open.\footnote{Some care is needed here: selection of a representative, projection to a reduced description, and quotienting by an equivalence relation are mathematically different operations. The present paper leaves open which, if any, is appropriate for the global calendar representation. In the pricing construction of \citet{AngstmannGebbieAdjoint2026}, the problem is specialised to operational-to-calendar projections, with adjoint consistency subsequently restricting the admissible projection class.}

This is the pre-pricing calendar-representation problem. In \citet{AngstmannGebbieAdjoint2026}, the operational kernel is held fixed while a family of operational-to-calendar projections is varied; for pricing, candidate projections are then restricted by the requirement that the induced forward density and backward valuation operators remain adjoint on the chosen state space. Here the analysis stops one layer earlier: it identifies the local family of event/timing laws compatible with a selected event kernel, without requiring that this family already determine a unique global calendar representation, pricing measure, or adjoint-real admissibility condition.
\end{document}